\newtheorem{theorem}{Theorem}
\newtheorem{prop}[theorem]{Proposition}
\newtheorem{lemma}[theorem]{Lemma}
\newtheorem{defi}[theorem]{Definition}
\newtheorem{example}[theorem]{Example}
\let\set\mathbb
\def\pa{\partial}
\def\d{\mathrm{d}}
\def\ord{\operatorname{ord}}
\def\ann{\operatorname{ann}}
\def\fac#1#2#3{{#1}^{[#2]_{#3}}} 
\def\fac#1#2#3{(#1;#2)_{#3}}
\def\leftpart#1#2{{\left.{#1}\right\lceil_{\kern-1pt#2}}}
\def\rightpart#1#2{{\left.{#1}\right\rceil_{#2}}}
\def\doi#1{\gdef\@doi{#1}}\def\@doi{}
\global\copyrightetc{
ACM \the\acmcopyr\ ...\$15.00}
\newfont{\mycrnotice}{ptmr8t at 7pt}
\newfont{\myconfname}{ptmri8t at 7pt}
\let\confname\myconfname%
\begin{document}

\permission{Permission to make digital or hard copies of all or part of this
work for personal or classroom use is granted without fee provided that copies
are not made or distributed for profit or commercial advantage and that copies
bear this notice and the full citation on the first page. Copyrights for
components of this work owned by others than the author(s) must be
honored. Abstracting with credit is permitted. To copy otherwise, or
republish, to post on servers or to redistribute to lists, requires prior
specific permission and/or a fee. Request permissions from
Permissions@acm.org.}

\conferenceinfo{ISSAC '14,}{July 23 - 25, 2014, Kobe, Japan\\ 
Copyright is held by the owner/author(s). Publication rights licensed to ACM.}
\crdata{978-1-4503-2501-1/14/07}
\doi{http://dx.doi.org/10.1145/2608628.2608641}


\title{A Generalized Apagodu-Zeilberger~Algorithm}

\numberofauthors{3}

\author{%
 \alignauthor
 \leavevmode
 \mathstrut Shaoshi Chen\titlenote{Supported by a
NSFC grant (91118001) and a 973 project
(2011CB302401).}\\[\smallskipamount]
  \affaddr{\leavevmode\mathstrut  KLMM, AMSS / Chinese Academy of Sciences}\\
  \affaddr{\leavevmode\mathstrut 100190 Beijing, China}\\
  \affaddr{\leavevmode\mathstrut schen@amss.ac.cn}
 \and
 \mathstrut Manuel Kauers\titlenote{Supported by the Austrian Science Fund (FWF) grant Y464-N18.}\\[\smallskipamount]
  \affaddr{\mathstrut RISC / Joh. Kepler University}\\
  \affaddr{\mathstrut 4040 Linz, Austria}\\
  \affaddr{\mathstrut mkauers@risc.jku.at}
 \and
 \mathstrut Christoph Koutschan\\[\smallskipamount]
  \affaddr{\mathstrut RICAM / Austrian Academy of Sciences}\\
  \affaddr{\mathstrut 4040 Linz, Austria}\\
  \affaddr{\mathstrut christoph.koutschan@ricam.oeaw.ac.at}
}

\maketitle
\begin{abstract}
  The Apagodu-Zeilberger algorithm can be used for computing annihilating
  operators for definite sums over hypergeometric terms, or for definite
  integrals over hyperexponential functions. In this paper, we propose a
  generalization of this algorithm which is applicable to arbitrary $\partial$-finite functions.
  In analogy to the hypergeometric case, we
  introduce the notion of proper $\partial$-finite functions. We show that the
  algorithm always succeeds for these functions, and we give a tight a priori
  bound for the order of the output operator.
\end{abstract}


\category{I.1.2}{Computing Methodologies}{Symbolic and Algebraic Manipulation}[Algorithms]


\terms{Algorithms}


\keywords{Symbolic summation, symbolic integration, $\partial$-finite function,
holonomic function, Ore algebra, creative telescoping}


\section{Introduction}

We consider the problem of creative telescoping: given a function $f(x,y)$, the task consists
in finding a linear operator $T\neq0$ in $x$ only, called a \emph{telescoper}, and another
operator $C$ possibly involving both $x$ and~$y$, called a \emph{certificate} for~$T$, such
that $T-\partial_y C$ annihilates the given function~$f(x,y)$. Here $\partial_y$ may be
for example the partial derivation~$\frac{\d}{\d y}$ or the forward difference operator
$\Delta_y$ with respect to~$y$.

Pairs $(T,C)$ are used for solving summation and integration problems. For
example, given a definite integral $F(x)=\int_0^1 f(x,y)\,\d y$ depending on a free
parameter~$x$, we may want to compute a creative telescoping relation
\[
 \Bigl(t_0+t_1\pa_x+\cdots+t_r\pa_x^r\Bigr)\cdot f(x,y) = \pa_y C \cdot f(x,y),
\]
where~$\pa_x, \pa_y$ are the partial derivations~$\frac{\d}{\d x},
\frac{\d}{\d y}$, respectively. By integrating both sides of the relation
above w.r.t.~$y$, we obtain an inhomogeneous linear differential equation
\[
  t_0 F(x) + t_1 F'(x) + \cdots + t_r F^{(r)}(x) = \bigl[\,C \cdot f(x,y)\,\bigr]_{y=0}^1
\]
for the integral. This equation can then be processed further by other algorithms,
for example to find closed form representations or asymptotic expansions for~$F(x)$.

Algorithms for computing creative telescoping pairs $(T,C)$ are known for
various classes of functions~$f(x,y)$. For hypergeometric terms, which
satisfy two first-order recurrence equations in $x$ and~$y$ respectively, the problem is
solved by Zeilberger's algorithm~\cite{Zeilberger1990c, Zeilberger1991}. An analogous algorithm for
hyperexponential functions, which satisfy two first-order differential
equations in $x$ and~$y$ respectively, was given by Almkvist and
Zeilberger~\cite{Almkvist1990}. In 1997, Chyzak~\cite{Chyzak2000} generalized these algorithms to
the case of general holonomic $\partial$-finite functions~$f(x,y)$, which are solutions of
systems of higher-order recurrence and/or differential equations, see Section~\ref{def:dfinite}
for a definition. For a detailed introduction to creative telescoping in the context
of holonomic functions, see~\cite{Koutschan13a,Chyzak14}. 

In 2005, Apagodu (formerly ``Mohamud Mohammed'') and Zeilberger~\cite{AZ2005} proposed an interesting variation of
Zeil\-ber\-ger's original algorithm for hypergeometric terms. This algorithm,
sketched in Section~\ref{sec:az} below, is easier to implement than Zeilberger's
original algorithm, it requires less computation time, and it gives rise to
good bounds for the order of the telescopers. A similar approach to compute
telescopers for general holonomic $\partial$-finite functions was proposed
and implemented in~\cite{Koutschan2010}; it proved superior to Chyzak's
algorithm in many examples from applications, but used some heuristics and thus lacked rigor.
In particular, no bounds concerning the telescoper were given there.

In the present paper, we want to do with the Apagodu-Zeilberger algorithm what
Chyzak did with the original Zeilberger algorithm: we extend it to a more
general setting. The setting is more general in two senses. First, we drop the
condition that the input is specified by first-order equations and instead cover
arbitrary $\partial$-finite input. Second, we do not restrict to the shift
and/or differential case but formulate the result in the language of Ore
algebras.  In this general context, we lose the property known for the
differential case that a creative telescoping pair $(T,C)$ always
exists. Therefore, in analogy to the hypergeometric case, we introduce the
notion of \emph{proper $\partial$-finite} functions, and give an explicit upper
bound on the order of telescopers for such functions. Good bounds are
useful in practice as they allow to compute telescoper and certificate 
without having to loop over the order of the ansatz operator (as it is
done, for example, in Zeilberger's algorithm).

\section{Preliminaries}

\subsection{Ore Algebras}\label{sec:ore}

The operator algebras we are going to work with were introduced
by Ore in 1933~\cite{Ore33}. They provide a common framework for representing
linear differential equations and linear ($q$-) difference equations;
the coefficients of these equations may be polynomials or rational functions,
for example.

Let $K$ be a field with $\set Q\subseteq K$.  Let $\sigma_x,\sigma_y\colon K(x,y)\to K(x,y)$
be field automorphisms with
$\sigma_x\sigma_y=\sigma_y\sigma_x$, and let $\delta_x,\delta_y\colon K(x,y)\to
K(x,y)$ be $K$-linear maps satisfying
$\delta_x(ab)=\delta_x(a)b+\sigma_x(a)\delta_x(b)$ and
$\delta_y(ab)=\delta_y(a)b+\sigma_y(a)\delta_y(b)$ for all $a,b\in K(x,y)$.  The
set $\set A=K(x,y)[\partial_x,\partial_y]$ of all bivariate polynomials in
$\partial_x,\partial_y$ with the usual addition, and with the unique
noncommutative multiplication satisfying
$\partial_x\partial_y=\partial_y\partial_x$ and $\partial_x a =
\sigma_x(a)\partial_x + \delta_x(a)$ and $\partial_y a = \sigma_y(a)\partial_y +
\delta_y(a)$ for all $a\in K(x,y)$ is an \emph{Ore algebra}~\cite{ChyzakSalvy1998}.
All Ore algebras appearing in this paper will be of this form.

Note that $\delta_x(\frac 1a)=-\frac{\delta_x(a)}{a\sigma_x(a)}$ for all $a\in
K(x,y)\setminus\{0\}$, and likewise for~$\delta_y$.

We assume that $\sigma_x,\sigma_y,\delta_x,\delta_y$ map polynomials 
to polynomials. Moreover, we assume that
$\deg_x(\sigma_x(p))=\deg_x(p)$, $\deg_y(\sigma_x(p))=\deg_y(p)$,
$\deg_x(\delta_x(p))\leq\deg_x(p)-1$ and that $\deg_y(\delta_x(p))\leq\deg_y(p)$ for
all $p\in K[x,y]$; likewise for $\sigma_y,\delta_y$.

\subsection{{\large $\partial$}-Finite Functions}\label{def:dfinite}
Many special functions used in mathematics and physics are solutions of systems of
linear differential and/or recurrence equations. Hypergeometric terms are functions that satisfy
a system of first-order linear recurrence equations and their continuous analogue
are hyperexponential functions. Their generalization to functions that satisfy a system
of higher-order equations leads to the concept of $\partial$-finite functions.

We let the Ore algebra~$\set A$ act on an appropriate space $F$ of ``functions''
by defining an operation $\cdot\colon\set A\times F\to F$; in particular,
one has to fix the result of applying $\partial_x$ and $\partial_y$ to a function.
The operation of applying an Ore operator $P\in\set A$ to a function~$f\in F$
turns $F$ into a left $\set A$-module. We define the
\emph{annihilator} (w.r.t.~$\set A$) of a function~$f$ as the set
$\{P\in\set A\mid P\cdot f=0\}$, denoted $\ann_{\set A}(f)$;
it is easy to verify that it is a left ideal in~$\set A$. For every left ideal
$\mathfrak{a}\subseteq\set A$ the quotient algebra $\set A/\mathfrak{a}$ is a
$K(x,y)$-vector space.

A left ideal $\mathfrak{a}\subseteq\set A$ is called \emph{zero-dimensional} or
\emph{$\partial$-finite} if $\dim_{K(x,y)}(\set A/\mathfrak{a})$ is finite.
A function~$f$ is called \emph{$\partial$-finite} (w.r.t.~$\set A$) if
$\ann_{\set A}(f)$ is a zero-dimensional left ideal.

\subsection{Left and Right Borders}\label{sec:borders}

Part of the additional generality provided in this paper comes at the expense of
a somewhat involved notation, which we now introduce.

For $a\in K(x,y)$ and $i\in\set N$, write $\fac aiy :=
\prod_{j=0}^{i-1}\sigma_y^j(a)$. Let~$p$ be a polynomial in
$K(x)[y]\setminus\{0\}$.  Choose~$n$ to be the largest nonnegative integer such
that there is a monic factor~$p_n$ in $K(x)[y]\setminus K(x)$ with
$\deg_y(p_n)$ as large as possible and $\fac {p_n}ny$ dividing~$p$.  We repeat
this process for $p/\fac {p_n}ny$ until obtaining a constant~$c\in K(x)$.  In
this way, the polynomial~$p$ can be written uniquely as
$p=c\prod_{i=1}^n\fac{p_i}{i}y$ with $c\in K(x)\setminus\{0\}$ and
$p_1,\dots,p_n\in K(x)[y]$ monic such that $\deg_y(p_n)>0$. When
$\sigma_y=\mathrm{id}$ this is the squarefree decomposition of~$p$ in~$y$, and
when $\sigma_y(y)=y+1$ it is the greatest factorial
factorization~\cite{Paule1995b} in~$y$, where the falling factorials in the
original definition are expressed in terms of rising factorials.  Define
$\leftpart py:=\prod_{i=1}^n p_i$ (left border of~$p$) and $\rightpart
py:=\prod_{i=1}^n \sigma_y^{i-1}(p_i)$ (right border of~$p$). When
$\sigma_y=\mathrm{id}$, both the left border $\leftpart py$ and the
right border $\rightpart py$ are equal to the squarefree part
of~$p$.  By definition, we have $p\,\sigma_y(\rightpart
py)=c\prod_{i=1}^n\fac{p_i}{i+1}y$, $\frac p{\rightpart
  py}=c\prod_{i=1}^n\fac{p_i}{i-1}y$, and the equality
\begin{equation}\label{eq:lrborders}
  p \, \sigma_y(\rightpart py) = {\leftpart py} \, \sigma_y(p).
\end{equation}
The notations $\fac aix$ (for $a\in K(x,y)$) and $\leftpart px,\rightpart px$ (for $p\in
K(y)[x]\setminus\{0\}$) are defined analogously.

\begin{lemma}\label{lemma:delta}
For $a\in K(x,y)\setminus\{0\}$ we have
\begin{equation}\label{eq:deltafac}
  \delta_y\big(\fac any\big) = \frac{\fac any}{a} \sum_{i=0}^{n-1}\delta_y\big(\sigma_y^i(a)\big).
\end{equation}
Moreover, for $p\in K(x)[y]\setminus\{0\}$ we have that $p \mathrel{\big|} \leftpart py \, \delta_y(p)$,
as polynomials in~$y$. Analogous statements hold when switching the roles of $x$ and~$y$.
\end{lemma}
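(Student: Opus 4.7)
The proof splits into the two claims, and for both the main engine is the skew Leibniz rule $\delta_y(ab)=\delta_y(a)b+\sigma_y(a)\delta_y(b)$ together with the identity $\sigma_y(\fac a{n-1}y)=\fac any/a$, which follows at once from $\sigma_y(\sigma_y^j(a))=\sigma_y^{j+1}(a)$ and the definition of~$\fac any$.

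For the formula \eqref{eq:deltafac} I would argue by induction on $n$. The case $n=0$ is trivial (both sides vanish). For the inductive step, write $\fac any=\fac a{n-1}y\,\sigma_y^{n-1}(a)$ and apply the Leibniz rule; the first summand becomes $\delta_y(\fac a{n-1}y)\sigma_y^{n-1}(a)$, which by the induction hypothesis equals $\tfrac{\fac any}a\sum_{i=0}^{n-2}\delta_y(\sigma_y^i(a))$, while the second summand is $\sigma_y(\fac a{n-1}y)\delta_y(\sigma_y^{n-1}(a))=\tfrac{\fac any}a\delta_y(\sigma_y^{n-1}(a))$ using the shift identity above. Adding gives the claim.

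For the divisibility statement write $p=c\prod_{i=1}^n q_i$ with $q_i:=\fac{p_i}{i}y$ and $c\in K(x)$; the degree hypotheses on $\sigma_y,\delta_y$ force $\delta_y(c)=0$. Iterating the Leibniz rule yields
\[
 \delta_y(p)=c\sum_{k=1}^n \sigma_y(q_1\cdots q_{k-1})\,\delta_y(q_k)\,q_{k+1}\cdots q_n,
\]
and then part~\eqref{eq:deltafac} lets me replace $\delta_y(q_k)$ by $\tfrac{q_k}{p_k}\sum_{j=0}^{k-1}\delta_y(\sigma_y^j(p_k))$. Multiplying the $k$-th summand by $\leftpart py=\prod_i p_i$ and dividing by $p=c\prod_i q_i$, the factors $q_k\cdots q_n$ cancel and, using $\sigma_y(q_i)/q_i=\sigma_y^i(p_i)/p_i$, the quotient collapses to
\[
 \prod_{i=1}^{k-1}\sigma_y^i(p_i)\;\prod_{i=k+1}^n p_i\;\sum_{j=0}^{k-1}\delta_y(\sigma_y^j(p_k)),
\]
which is manifestly a polynomial in~$y$. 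Summing over $k$ proves $p\mid \leftpart py\,\delta_y(p)$.

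The only mildly delicate point is bookkeeping: tracking which factors get shifted by~$\sigma_y$ when the Leibniz rule is iterated, and verifying that the shifted $\sigma_y^i(p_i)$ appearing in the numerator exactly match what is needed to absorb the denominator $q_1\cdots q_{k-1}$. Once the identity $\sigma_y(q_i)/q_i=\sigma_y^i(p_i)/p_i$ is observed, everything cancels cleanly. The analogous assertions with the roles of $x$ and $y$ exchanged follow by the same argument applied to the algebra with $x$ and $y$ swapped.
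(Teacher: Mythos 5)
Your proof is correct and follows essentially the same route as the paper: both arguments rest on the iterated skew Leibniz rule (the paper states it as a general product formula proved by induction, you fold the induction into the proof of \eqref{eq:deltafac} directly), and both establish the divisibility claim by the same explicit cancellation, using $\sigma_y(\fac{p_i}{i}y)/\fac{p_i}{i}y=\sigma_y^i(p_i)/p_i$ to show that $\leftpart py\,\delta_y(p)/p$ is a polynomial. No gaps.
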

\begin{proof}
The proof uses the general product rule for $\delta_y$,
\begin{equation}\label{eq:prodrule}
  \delta_y\bigg(\prod_{i=1}^n a_i\bigg) =
    \sum_{i=1}^n \bigg(\prod_{j=1}^{i-1} \sigma_y(a_j)\bigg) \delta_y(a_i)
      \bigg(\prod_{j=i+1}^n a_j\bigg),
\end{equation}
where $a_1,\dots,a_n$ are arbitrary elements in $K(x,y)$;
the formula can be verified by an easy induction argument.
Applying~\eqref{eq:prodrule} to $\fac any$ proves the first assertion:
\begin{align*}
  \delta_y\big(\fac any\big) &= \delta_y\bigg(\prod_{i=1}^n \sigma_y^{i-1}(a)\bigg) \\
  &= \sum_{i=1}^n \bigg(\prod_{j=1}^{i-1} \sigma_y^j(a)\bigg) \delta_y\big(\sigma_y^{i-1}(a)\big)
    \bigg(\prod_{j=i+1}^n \sigma_y^{j-1}(a)\bigg) \\
  &= \fac{\sigma_y(a)}{n-1}y \sum_{i=0}^{n-1}\delta_y\big(\sigma_y^i(a)\big)
\end{align*}
To prove the second assertion, assume that $p$ is given in the unique form
$c\prod_{i=1}^n\fac{p_i}{i}y$ as above. Once again, the product
rule~\eqref{eq:prodrule} is employed, in combination with the first assertion:
\begin{align*}
  \delta_y(p) &= \delta_y\bigg(c\prod_{i=1}^n \fac{p_i}iy\bigg) \\
  &= c\sum_{i=1}^n \bigg(\prod_{j=1}^{i-1} \sigma_y\big(\fac{p_j}jy\big)\bigg)
     \delta_y\big(\fac{p_i}iy\big) \bigg(\prod_{j=i+1}^n \fac{p_j}jy \bigg) \\
  &= \frac{c\,p}{\leftpart py} \sum_{i=1}^n \bigg(\prod_{j=1}^{i-1}\sigma_y^j(p_j)\bigg)
       \bigg(\sum_{k=0}^{i-1}\delta_y\big(\sigma_y^k(p_i)\big)\bigg)
       \bigg(\prod_{j=i+1}^n\! p_j\bigg).
\end{align*}
Note that $\frac{p}{\leftpart py}=\prod_{j=2}^n \sigma_y\big(\fac{p_j}{j-1}y\big)$
in the last line. This completes the proof.
\end{proof}

\subsection{The Apagodu-Zeilberger Algorithm}\label{sec:az}

The Apagodu-Zeilberger algorithm~\cite{AZ2005} solves the same problem as
Zeilberger's algorithm~\cite{Zeilberger1991}: creative telescoping for proper hypergeometric terms.
Before generalizing this algorithm to general $\partial$-finite functions, let us summarize the
reasoning behind it at a simple example. Consider
the hypergeometric term $h(x,y):=\frac1{\Gamma(ax+by)}$ for two positive
integers $a,b\in\set N$. We want to find
$T=t_0+t_1\partial_x+\cdots+t_r\partial_x^r\in K(x)[\partial_x]\setminus\{0\}$
(a~\emph{telescoper}) and $C\in K(x,y)$ (a~\emph{certificate}) such that
\[
  T\cdot h(x,y) = \partial_y C \cdot h(x,y),
\]
where $\partial_x$ denotes the shift operator with respect to~$x$ (i.e. $\sigma_x(x)=x+1$, $\delta_x=0$)
and $\partial_y$ denotes the forward difference with respect to~$y$ (i.e. $\sigma_y(y)=y+1$, $\delta_y(y)=1$).

By $\Gamma(ax+by+ia)=(ax+by)(ax+by+1)\cdots(ax+by+ia-1)\cdot \Gamma(ax+by)$ for all $i\geq 0$, we have
\[
  T\cdot h(x,y) = \frac{u}{(ax+by)\cdots(ax+by+ra-1)} h(x,y)
\]
for some polynomial $u$ of $y$-degree $ra$ whose coefficients are linear
combinations of the undetermined coefficients~$t_i$. For the choice
\[
  C=\frac{c_0+c_1y+\cdots+c_sy^s}{(ax+by)(ax+by+1)\cdots(ax+by+ra-b-1)},
\]
we obtain
\[
  \partial_y C\cdot h(x, y) = \frac{v}{(ax+by)\cdots(ax+by+ra-1)} h(x,y)
\]
for some polynomial $v$ of $y$-degree $s + b$. The denominators on both sides agree, and if we take
$s=ra-b$, so do the degrees (provided that $ra\geq s$).
Coefficient comparison yields a linear homogeneous system with
$ra+1$ equations and $(r+1)+(ra-b+1)$ variables (the $t_i$'s and the $c_j$'s). As soon as $r\geq b$,
this system has a nontrivial solution.

A telescoper $T$ coming from such a nontrivial solution cannot be zero, for if
it were, then also $\partial_y C\cdot h(x,y)$ would be zero, and then $C\cdot
h(x,y)$ would be constant with respect to~$y$, which is not the case because
$C$ is a nonzero rational function but $h(x,y)$ is not rational.

Similar calculations can be carried out for when $a$ or $b$ are negative. By
plugging all of them together, Apagodu and Zeilberger~\cite{AZ2005} show that a
(non-rational) proper hypergeometric term
\[
  p\,\alpha^x\beta^y\prod_{m=1}^M
    \frac{\Gamma(a_mx+a_m'y+a_m'')\Gamma(b_mx-b_m'y+b_m'')}
         {\Gamma(u_mx+u_m'y+u_m'')\Gamma(v_mx-v_m'y+v_m'')}
\]
($p\in K[x,y]$, $M\in\set N$, $\alpha,\beta,a_m'',b_m'',u_m'',v_m''\in K$,
$a_m$, $a_m'$, $b_m$, $b_m'$, $u_m$, $u_m'$, $v_m$, $v_m'\in\set N$) admits a nonzero
telescoper of order at most $\max\bigl\{\sum_{m=1}^M (a_m'+v_m'),\sum_{m=1}^M
(u_m'+b_m')\bigr\}$. A refinement of this bound, including the $x$-degree
of the telescoper, is given in~\cite{ChenKauers12b}.

In the differential case, they find~\cite{Apagodu2006} that a (non-rational) hyperexponential function
\[
  p \, \exp\!\left(\frac ab\right)\prod_{m=1}^M q_m^{e_m}
\]
($a,b,p,q_1,\dots,q_M\in K(x)[y]$, $e_1,\dots,e_m\in K$) admits a telescoper of
order at most
$\deg_y(b)+\max\{\deg_y(a),\deg_y(b)\}+\sum_{m=1}^M\deg_y(q_m)-1$.
In~\cite[Thm. 14]{ChenKauers12a} it is shown that this bound can be improved
by replacing the first term $\deg_y(b)$ by the $y$-degree of the
squarefree part of~$b$, and that when the term is a rational function the bound
increases by~$1$.
A further improvement is given in \cite[Sec. 6.2]{BostanChenChyzakLiXin13}.

\section{The General Case}\label{sec:general}

Let $\set A=K(x,y)[\partial_x,\partial_y]$ be an Ore algebra as introduced in
Section~\ref{sec:ore} and $\mathfrak{a}\subseteq\set A$ be a $\partial$-finite
ideal.  Further let $B=\{b_1,\dots,b_n\}$ be a $K(x,y)$-basis of $\set
A/\mathfrak{a}$, so that every element of $\set A/\mathfrak{a}$ can be written
uniquely in the form $wb=\sum_{i=1}^nw_ib_i$ for some vector
$w=(w_1,\dots,w_n)\in K(x,y)^n$ and $b=(b_1,\dots,b_n)^T$. We say that the
vector $w=(w_1,\dots,w_n)$ represents the element $wb\in\set A/\mathfrak a$.
For all~$b_i\!\in\! B$ we can write
\begin{equation}\label{eq:multmat}
  \partial_x b_i = \sum_{j=1}^n m_{i,j} b_j \quad \text{with~$m_{i,j} \in K(x, y)$}.
\end{equation}
With $M=(m_{i,j})_{1\leq i,j\leq n}\in K(x,y)^{n\times n}$
equation~\eqref{eq:multmat} can be expressed succinctly as $\partial_xb=Mb$,
where the operator $\partial_x$ is applied componentwise.  Applying
$\partial_x$ to an arbitrary element in $\set A/\mathfrak a$ gives
\[
  \partial_x(wb) = \big(\sigma_x(w)\partial_x+\delta_x(w)\big)b =
  \big(\sigma_x(w)M+\delta_x(w)\big)b
\]
where $\sigma_x$ and $\delta_x$ act on the components of~$w\in K(x,y)^{n}$.
As a shorthand notation, we will write the above identity as
$\partial_xw=\sigma_x(w)M+\delta_x(w)$, by defining an action of operators
from~$\set A$ on vectors in $K(x,y)^n$.  Similarly, there is a matrix~$N\in
K(x,y)^{n\times n}$ such that $\partial_yb=Nb$ and $\partial_y w=
\sigma_y(w)N+\delta_y(w)$.  Without loss of generality, we assume that the
basis~$B$ is chosen in such a way that the element $1\in\set A/\mathfrak{a}$
is represented by a \emph{polynomial} vector $e\in K(x)[y]^n$; we call such
bases ($y$-)\emph{admissible}.

The matrices $M$ and $N$ correspond to the rational functions $\partial_x h/h$
and $\partial_y h/h$ in the hyperexponential case, and similarly in the
hypergeometric case. In general, their entries are rational functions. We will
write $M=\frac1uU$, $N=\frac1vV$ where $u,v\in K[x,y]$ and $U,V\in
K[x,y]^{n\times n}$.

\subsection{Telescoper Part}\label{sec:T}

For $r\geq1$, make an ansatz $T=\sum_{i=0}^r t_i\partial_x^i$ for the
telescoper, in which $t_0,\dots,t_r$ stand for undetermined elements
of~$K(x)$, so that $T$ is an element of $K(x)[\partial_x]\subseteq\set A$. We
need to discuss the shape of the vector $Te\in K(x,y)^{n}$, i.e., its
denominator and its numerator degree in~$y$.

\begin{lemma}\label{lemma:T}
Let $e\in K(x)[y]^n$ be some polynomial vector. 
For every $i\geq0$ we have $\partial_x^i e = \frac1{\fac uix}w$
for some vector $w\in K(x)[y]^n$ with
\[
  \deg_y(w)\leq \deg_y(e)+i\max\{\deg_y(u),\deg_y(U)\},
\]
where the degree of a matrix or vector refers to the maximum degree of its components.
\end{lemma}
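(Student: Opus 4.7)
The plan is to prove the claim by induction on $i$. The base case $i=0$ is immediate since $\partial_x^0 e = e$ and $\fac ux0=1$, the empty product.

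For the induction step, assume $\partial_x^i e = \frac{1}{\fac uix}w$ with $w\in K(x)[y]^n$ satisfying the stated degree bound, and set $d=\max\{\deg_y(u),\deg_y(U)\}$. Writing $v := \frac{1}{\fac uix}w$, I would apply the action rule $\partial_x v = \sigma_x(v)M + \delta_x(v)$ and then expand $\delta_x(v)$ using the product rule for $\delta_x$ on $\frac{1}{\fac uix}\cdot w$. Multiplying through by the common denominator $\fac u{i+1}x = u\cdot\sigma_x(\fac uix)$, this produces three terms:
\begin{align*}
  \fac u{i+1}x\cdot\partial_x^{i+1}e
   &= \sigma_x(w)\,U + u\,\delta_x(w) \\
   &\quad {} - w\cdot\frac{u\,\delta_x(\fac uix)}{\fac uix}.
\end{align*}
The first two terms are already in $K(x)[y]^n$ after the cancellations between $\sigma_x(\fac uix)$, $u$, and the denominator $u$ of $M=U/u$.

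The main obstacle is the third term: one must show that $\fac uix$ divides $u\,\delta_x(\fac uix)$ as a polynomial in $y$, so that the result is again polynomial. This is exactly the first assertion of Lemma~\ref{lemma:delta} (applied with the roles of $x$ and $y$ swapped), which gives
\[
  \delta_x(\fac uix) = \frac{\fac uix}{u}\sum_{j=0}^{i-1}\delta_x(\sigma_x^j(u)),
\]
so the third term collapses to $-w\sum_{j=0}^{i-1}\delta_x(\sigma_x^j(u))\in K(x)[y]^n$. Hence $\partial_x^{i+1}e = \frac{1}{\fac u{i+1}x}w'$ with $w'\in K(x)[y]^n$.

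It remains to track the $y$-degrees. Using the standing assumptions from Section~\ref{sec:ore} that $\sigma_x$ preserves $y$-degrees and $\delta_x$ does not raise them, together with the induction hypothesis $\deg_y(w)\le\deg_y(e)+id$, each of the three summands in $w'$ is bounded in $y$-degree by $\deg_y(w)+d \le \deg_y(e)+(i+1)d$. This yields the desired bound and completes the induction.
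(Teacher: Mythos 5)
Your proof is correct and follows essentially the same route as the paper: induction on $i$, the same three-term decomposition after clearing the denominator $\fac u{i+1}x$, the same appeal to the first assertion of Lemma~\ref{lemma:delta} (with $x$ and $y$ swapped) to see that the troublesome term $\frac{u\,\delta_x(\fac uix)}{\fac uix}$ is a polynomial of $y$-degree at most $\deg_y(u)$, and the same degree bookkeeping. The only (immaterial) difference is the order in which you apply the skew Leibniz rule to the product $\frac1{\fac uix}\cdot w$, which swaps which factor carries $\sigma_x$ in the second and third terms.
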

\begin{proof}
  The claim is evident for $i=0$. Assume it holds for~$i$. Then
  \begin{alignat*}1
    \partial_x^{i+1}e
      &=\partial_x\Bigl(\frac1{\fac uix}w\Bigr)\\
      &=\sigma_x\Bigl(\frac1{\fac uix}w\Bigr)\frac1uU + \delta_x\Bigl(\frac1{\fac uix}w\Bigr)\\
      &=\frac{\sigma_x(w)U}{u\,\sigma_x\big(\fac uix\big)} + \frac{\delta_x(w)}{\fac uix} +
        \delta_x\Bigl(\frac1{\fac uix}\Bigr)\sigma_x(w)\\
      &=\frac{\sigma_x(w)U}{\fac u{i+1}x} + \frac{\delta_x(w)}{\fac uix} 
        - \frac{\delta_x\big(\fac uix\big)\,\sigma_x(w)}{\fac uix\,\sigma_x\big(\fac uix\big)}.
  \end{alignat*}
  The last term in the above line can be simplified by
  \[
    \frac{\delta_x\big(\fac uix\big)}{\fac uix\,\sigma_x\big(\fac uix\big)} =
    \frac1{\fac u{i+1}x}\frac{u\,\delta_x\big(\fac uix\big)}{\fac uix} =
    \frac{\tilde{u}}{\fac u{i+1}x}
  \]
  where Lemma~\ref{lemma:delta} ensures that $\tilde{u}$ is a polynomial in $K[x,y]$.
  Since we assume throughout that $\sigma_x$ and $\delta_x$ do not increase the $y$-degree
  of polynomials, we conclude from~\eqref{eq:deltafac} that $\deg_y(\tilde u)\leq\deg_y(u)$.
  Therefore, we obtain
  \[
    \partial_x^{i+1}e = \frac{\sigma_x(w)U + \sigma_x^i(u)\delta_x(w) - \tilde u \sigma_x(w)}{\fac u{i+1}x},
  \]
  and the whole numerator is bounded in $y$-degree by
  \begin{alignat*}1
    &\max\bigl\{\deg_y(\sigma_x(w))+\deg_y(U), \\
    &\qquad\deg_y(\sigma_x^i(u))+\deg_y(\delta_x(w)),\\
    &\qquad\deg_y(\tilde u)+\deg_y(\sigma_x(w))\bigr\}\\
    &\leq \max\bigl\{\deg_y(w)+\deg_y(U), \deg_y(u)+\deg_y(w), \\
    &\qquad\deg_y(u)+\deg_y(w)\bigr\}\\
    &\leq \max\{\deg_y(U),\deg_y(u)\}+\deg_y(w)\\
    &\leq \deg_y(e)+(i+1)\max\{\deg_y(u),\deg_y(U)\},
  \end{alignat*}
  as claimed.
\end{proof}

By writing $\frac1{\fac uix}=\frac{\fac{\sigma^i(u)}{r-i}x}{\fac urx}$ in the above lemma,
we find that we can write
\[
  Te = \frac1{\fac urx}w
\]
for some vector $w$ whose entries are linear combinations of
$t_0,\dots,t_r$ with coefficients in $K(x)[y]$ bounded in degree by
$\deg_y(e)+r\max\{\deg_y(u),\deg_y(U)\}$.

\subsection{Certificate Part}\label{sec:C}

We need to characterize those certificates $C\in\set A$ for which the vector
$\partial_yCe$ matches a prescribed numerator degree and a prescribed denominator $d\in K(x)[y]$;
as before, let $e$ denote the polynomial
vector representing the element $1\in\set A/\mathfrak{a}$ with respect to the
basis~$B$. It will be convenient to focus on possible numerators and
denominators of the vector $c:=Ce\in K(x,y)^n$.

Let $d\in K(x)[y]$ be the target denominator. It will turn out that factors of
$d$ which also appear in $v$ (the denominator in the $\partial_y$-multiplication matrix)
behave slightly different than other factors. Let us therefore write
\begin{alignat*}1
  d &= \fac{f_1}{p_1}y\cdots\fac{f_m}{p_m}y\,g\\
  v &= \fac{f_1}{q_1}y\cdots\fac{f_m}{q_m}y\,\sigma_y(h)
\end{alignat*}
so that $f_1,\dots,f_m\in K(x)[y]$ are common factors of $d$ and $v$.
Note that we don't impose any coprimeness conditions on the $f_i$'s
with~$g$ and~$h$. Therefore, without loss of generality, we may always
assume that $p_i\geq q_i$, by moving possible overhanging factors of
some $\fac{f_i}{q_i}y$ into~$\sigma_y(h)$.

\begin{lemma}\label{lemma:C}
  Assume that $p_i\geq q_i\geq1$ for $i=1,\dots,m$ and let
  \[
    z = \sigma_y^{-1}\Bigl(\frac{\fac{f_1}{p_1}y\cdots\fac{f_m}{p_m}y}{\fac{f_1}{q_1}y\cdots\fac{f_m}{q_m}y}\Bigr)
     \frac{g}{\rightpart gy}.
  \]
  Note that $z\in K(x)[y]$.
  Let $w\in K(x)[y]^{n}$ be any polynomial vector and consider
  $c=\frac{h}{z}w$.
  Then $\partial_y c=\frac1d\tilde{w}$ for some vector $\tilde{w}\in K(x)[y]^{n}$
  with $\deg_y(\tilde{w})\leq \deg_y(w)+\deg_y(\leftpart gy)+\max\{\deg_y(v)-1,\deg_y(V)\}$.
\end{lemma}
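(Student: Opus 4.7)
The plan is to compute $d\cdot\partial_y c$ directly from the formula $\partial_y c=\sigma_y(c)V/v+\delta_y(c)$ and verify that each summand is a polynomial vector within the claimed $y$-degree bound. The workhorse will be the algebraic identity $d=\leftpart g y\cdot R\cdot\sigma_y(z)$, where $R:=\prod_{i=1}^m\fac{f_i}{q_i}y$. To derive it, I unpack the definition of $z$: a direct computation gives $\sigma_y(z)=\bigl(\prod_i\fac{f_i}{p_i}y/\fac{f_i}{q_i}y\bigr)\cdot\sigma_y(g/\rightpart g y)$, and identity \eqref{eq:lrborders} applied to $g$ yields $\sigma_y(g/\rightpart g y)=g/\leftpart g y$; rearranging against $d=\prod_i\fac{f_i}{p_i}y\cdot g$ produces the desired formula. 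Two immediate consequences will be used repeatedly: $d\sigma_y(h/z)=\leftpart g y\cdot v$ and $d/(v\sigma_y(z))=\leftpart g y/\sigma_y(h)$. The first collapses the shift part: $d\cdot\sigma_y(c)V/v=\leftpart g y\,\sigma_y(w)V$, a polynomial of $y$-degree at most $\deg_y(\leftpart g y)+\deg_y(w)+\deg_y(V)$.

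For the $\delta_y(c)$ part, I apply the Leibniz rule to $c=(h/z)w$ to get $\delta_y(c)=\delta_y(h/z)\,w+\sigma_y(h/z)\,\delta_y(w)$. Multiplying the second summand by $d$ and invoking $d\sigma_y(h/z)=\leftpart g y\,v$ gives $\leftpart g y\,v\,\delta_y(w)$; our Ore-algebra assumption that $\delta_y$ strictly decreases the $y$-degree then bounds this contribution by $\deg_y(\leftpart g y)+\deg_y(w)+(\deg_y(v)-1)$. Both contributions treated so far already fit within the target $\deg_y(\leftpart g y)+\deg_y(w)+\max\{\deg_y(v)-1,\deg_y(V)\}$, so everything hinges on the remaining term $d\cdot\delta_y(h/z)\cdot w$.

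Expanding $\delta_y(h/z)=(\sigma_y(z)\delta_y(h)-\sigma_y(h)\delta_y(z))/(z\sigma_y(z))$ and using $d/(z\sigma_y(z))=\leftpart g y R/z$, the quantity $d\,\delta_y(h/z)$ splits as $(\leftpart g y R\,\sigma_y(z)/z)\,\delta_y(h)-(\leftpart g y\,v/z)\,\delta_y(z)$. A short calculation of $\sigma_y(z)/z$ via the explicit form of $z$ yields $\sigma_y(z)/z=\prod_i\sigma_y^{p_i-1}(f_i)/\sigma_y^{q_i-1}(f_i)\cdot\rightpart g y/\leftpart g y$, which shows that $\leftpart g y R\,\sigma_y(z)/z$ is in fact polynomial, of $y$-degree $\deg_y(\leftpart g y)+\deg_y(v)-\deg_y(h)$; together with $\deg_y(\delta_y(h))\le\deg_y(h)-1$ this bounds the first summand by $\deg_y(\leftpart g y)+\deg_y(v)-1$.

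The main obstacle is the second summand $-\leftpart g y\,v\,\delta_y(z)/z$. Here I invoke Lemma~\ref{lemma:delta} applied to $z$, which asserts $z\mid\leftpart z y\,\delta_y(z)$; quantitatively, $\delta_y(z)/z=\beta/\leftpart z y$ for some $\beta\in K(x)[y]$ with $\deg_y(\beta)\le\deg_y(\leftpart z y)-1$. Thus polynomiality reduces to the divisibility $\leftpart z y\mid\leftpart g y\cdot v$, which is the technical heart of the proof. I would establish it by reading off the greatest factorial factorization of $z=\sigma_y^{-1}(P/R)\cdot g/\rightpart g y$: each leftmost factor of a block in its decomposition either comes from a tail block $\sigma_y^{q_i-1}(f_i)\cdots\sigma_y^{p_i-2}(f_i)$, whose leftmost term $\sigma_y^{q_i-1}(f_i)$ divides $R\mid v$, or from $g/\rightpart g y=c'\prod_{j\ge2}\fac{g_j}{j-1}y$, whose leftmost entries divide $\leftpart g y$; since combining GFFs can only shrink left borders, $\leftpart z y$ divides the product of these contributions, hence divides $\leftpart g y\cdot v$. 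With this in hand, degree bookkeeping yields $\deg_y(\leftpart g y\,v\,\beta/\leftpart z y)\le\deg_y(\leftpart g y)+\deg_y(v)-1$, and multiplying by $w$ finishes the proof.
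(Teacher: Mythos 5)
Your proof is correct and follows essentially the same route as the paper's: the same key identity $d\,\sigma_y(h)=\sigma_y(z)\,v\,\leftpart gy$ (your $d=\leftpart gy\,R\,\sigma_y(z)$), the same three-term decomposition of $d\,\partial_y c$ with the first and last terms handled identically, and the same treatment of the delicate middle term via Lemma~\ref{lemma:delta} applied to $z$ together with the divisibility of $\leftpart zy$ into $\fac{f_1}{q_1}y\cdots\fac{f_m}{q_m}y\,\leftpart gy$. The only (immaterial) deviation is that you group the quotient-rule numerator as $\sigma_y(z)\delta_y(h)-\sigma_y(h)\delta_y(z)$ where the paper uses $z\,\delta_y(h)-h\,\delta_y(z)$; the two forms coincide in every Ore algebra of the kind considered here, so both splits are valid and yield the same degree bounds.
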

\begin{proof}
  We show that $d\,\partial_yc$ is a polynomial vector with the claimed degree. From
  \[
    \partial_yc=\partial_y \frac{h}{z}w
    =\frac{\sigma_y(h)}{\sigma_y(z)}\sigma_y(w)\frac1vV+\delta_y\Bigl(\frac hzw\Bigr)
  \]
  we get
  \begin{alignat}1\label{eq:ddyC}
    d\,\partial_yc=\frac{d\,\sigma_y(h)}{v\,\sigma_y(z)}\sigma_y(w)V+
    d\,\delta_y\Bigl(\frac hz\Bigr)w + \frac {d\,\sigma_y(h)}{\sigma_y(z)}\delta_y(w).
  \end{alignat}
  A straightforward calculation using~\eqref{eq:lrborders} gives the equality
  $d\sigma_y(h)= \sigma_y(z)v\leftpart gy$ which we employ in the following to replace~$d$.
  The first term in expression~\eqref{eq:ddyC}
  simplifies to $\leftpart gy\sigma_y(w)V$, the $y$-degree of which is bounded by
  $\deg_y(\leftpart gy)+\deg_y(w)+\deg_y(V)$, as claimed. Similarly, the third term
  simplifies to $v\leftpart gy\delta_y(w)$, the $y$-degree of which is bounded
  by $\deg_y(v)+\deg_y(\leftpart gy)+\deg_y(w)-1$, also as claimed. Finally,
  the second term of~\eqref{eq:ddyC} is considered:
  \begin{alignat*}1
    d\,\delta_y\Bigl(\frac hz\Bigr)w
    &=\frac{\sigma_y(z)\,v\leftpart gy}{\sigma_y(h)} \, \frac{\delta_y(h)\,z-h\,\delta_y(z)}{z\,\sigma_y(z)} \, w\\
    &=\frac{v}{\sigma_y(h)}\leftpart gy \delta_y(h)w - \frac{v\,\delta_y(z)\leftpart gy}{z\,\sigma_y(h)}hw.
  \end{alignat*}
  The first term in this expression is a polynomial because $\sigma_y(h)\mid v$.
  Its degree is bounded by $\left(\deg_y(v)-\deg_y(h)\right)+\left(\deg_y(h)-1\right)+
  \deg_y(\leftpart gy)+\deg_y(w) =\deg_y(w)+\deg_y(\leftpart gy)+\deg_y(v)-1$, as
  claimed. Also for the second term the degree count matches the claim. To see
  finally that also this second term is a polynomial in~$y$, write
  \begin{alignat*}1
    z & =
      \bigg(\prod_{i=1}^m\fac{\sigma_y^{q_i-1}(f_i)}{p_i-q_i}y\bigg) g,
  \end{alignat*}
  which implies that~$\leftpart zy$ divides~$\fac{f_1}{q_1}y\cdots\fac{f_m}{q_m}y \leftpart gy$.
  We now write the second term as
  \[
   \frac{v\,\delta_y(z)\leftpart gy}{z\,\sigma_y(h)}hw =
   \frac{\fac{f_1}{q_1}y\cdots\fac{f_m}{q_m}y \leftpart gy}{\leftpart zy} \, \frac{\leftpart zy \, \delta_y(z)}{z}\, hw
 \]
 and observe that $z$ divides $\leftpart zy \, \delta_y(z)$ by Lemma~\ref{lemma:delta}, which concludes the proof.
\end{proof}

\subsection{Proper {\large $\partial$}-finite Ideals}

In order to obtain a bound for the order of the telescoper, we apply
Lemmas~\ref{lemma:T} and~\ref{lemma:C} in such a way that the vectors
$Te$ and $\partial_y c$ match. In particular, we need to match the denominator
and the degree of the numerator. From Section~\ref{sec:T} we know that the
denominator coming from the telescoper part is $\fac urx$, and the $y$-degree of
the numerator is at most $\deg_y(e)+r\max\{\deg_y(u),\deg_y(U)\}$. From
Section~\ref{sec:C} we know how to choose $c$ in such a way that $\partial_yc$
has a prescribed denominator and a given numerator degree. Coefficient
comparison with respect to~$y$
will give a system of linear equations, and we will be able to choose
$r$ in such a way that this system has a solution.

This is the basic idea, but there is a complication. The denominator coming from
the telescoper part is expressed with respect to~$\sigma_x$ while
Lemma~\ref{lemma:C} requires the prescribed denominator to be expressed with
respect to~$\sigma_y$. There is of course no difference (and hence no
complication) when $\sigma_x=\sigma_y=\mathrm{id}$, as for instance in the
differential case. However, in general it is necessary to impose some further assumption
on the $\partial$-finite ideal~$\mathfrak{a}$ in order for the argument to go
through.

We propose one such assumption in the following definition. It generalizes
the distinction between hypergeometric terms and \emph{proper} hypergeometric
terms known from classical summation theory~\cite{Wilf1992, AbramovPetkovsek2002a}.
At the same time, it refines this notion by distinguishing the free variable $x$
from the summation/integration variable~$y$.

\begin{defi}\label{def:proper}
\begin{enumerate}
\item\label{def:proper:1} A polynomial $u\in K[x,y]$ is called\linebreak
  $y$-\emph{proper} with respect to two endomorphisms $\sigma_x,\sigma_y$
  if $\deg_y\bigl(\leftpart{\fac urx}y\bigr)=\mathrm{O}(1)$ as $r\to\infty$.
\item A $\partial$-finite ideal $\mathfrak{a}\subseteq K(x,y)[\partial_x,\partial_y]=:\set A$ is called \emph{proper}
  (with respect to~$y$) if there exists a $y$-admissible basis $B$ of $\set A/\mathfrak{a}$,
  i.e., the element $1\in\set A/\mathfrak{a}$ is represented by a vector in $K(x)[y]^n$,
  for which the multiplication matrix $\frac1uU$ is such that
  $u$ is $y$-proper with respect to the two endomorphisms $\sigma_x$ and $\sigma_y$ of~$\set A$.
\item\label{def:proper:3}
  Let $B$ and $\frac1uU,\frac1vV$ be as above.
  Let $\eta\in\set N$ be the smallest number such that for all $r\geq1$ there exist
  $f_1,\dots,f_m,g,h\in K[x,y]$,
  $p_1,\dots,p_m,q_1,\dots,q_m\in\set N$, $p_i\geq q_i\geq1$ for $i=1,\dots,m$, with
  \[
    v=\sigma_y(h)\prod_{i=1}^m\fac{f_i}{q_i}y
    \quad\text{\!and}\quad
    \fac urx=g \prod_{i=1}^m\fac{f_i}{p_i}y
  \]
  and $\deg_y(\leftpart gy)\leq\eta$. Then
  \[
    \eta+\max\{\deg_y(v)-1,\penalty-100\deg_y(V)\}
  \]
  is called the \emph{height} of $\mathfrak{a}$ with respect to the basis~$B$.
\item Let $\mathfrak{a}\subseteq\set A$ be a proper $\partial$-finite ideal.
  The \emph{height} of~$\mathfrak{a}$ is defined as the minimum height
  of~$\mathfrak{a}$ with respect to all admissible bases of $\set A/\mathfrak{a}$.
\end{enumerate}
\end{defi}

It is obvious that when $\sigma_x=\sigma_y=\mathrm{id}$, as for instance in the
differential case, then every $\partial$-finite ideal is proper
$\partial$-finite, because in this case $\leftpart{\fac urx}y$ is simply the
squarefree part of~$u$, which does not depend on~$r$. We will further show in
Proposition~\ref{prop:diff} below that in the differential case we always have
$\eta=0$. For the shift case, we will show (Proposition~\ref{prop:shift}) that when $\mathfrak{a}$ is
the annihilator of a hypergeometric term~$h$, then $h$ is proper hypergeometric
if and only if $\mathfrak{a}$ is proper $\partial$-finite with respect to both
$x$ and~$y$.

In part~\ref{def:proper:3} of the definition, observe that the $y$-properness of
$u$ implies that such a number $\eta$ always exist, because a possible (but
perhaps not optimal) choice is $g=\fac urx$, $h=\sigma^{-1}(v)$ and no $f_i$'s
at all (i.e.,~$m=0$). The more complicated condition in the definition allows
for smaller values of $\eta$ by discarding common factors of $u$ and~$v$. This
is desirable because smaller values of $\eta$ will lead to a smaller bound for
the telescoper in Theorem~\ref{thm:bound} below.

\begin{example}
We demonstrate that the right choice of the basis~$B=\{b_1,\dots,b_n\}$ is
crucial in the definition of proper $\partial$-finite ideals. Let $H\in
K(x,y)^{n\times n}$ denote the matrix that realizes the change to a new basis
$\tilde{B}=\{\tilde{b}_1,\dots,\tilde{b}_n\}$, i.e., $\tilde{b}=Hb$ with
$b=(b_1,\dots,b_n)^T$ and $\tilde{b}=(\tilde{b}_1,\dots,\tilde{b}_n)^T$.  If
$M$ is the $\partial_x$-multiplication matrix with respect to~$B$, i.e.,
$\partial_xb=Mb$ as in~\eqref{eq:multmat}, then from
\[
  \partial_x\tilde{b} = \partial_xHb = \big(\sigma_x(H)M+\delta_x(H)\big)b = \tilde{M}\tilde{b}
\]
it follows that $\tilde{M}=\big(\sigma_x(H)M+\delta_x(H)\big)H^{-1}$ is the
$\partial_x$-multi\-plication matrix with respect to~$\tilde{B}$ (and
analogously for $N$ and~$\tilde{N}$). In the following, let $\sigma_x(x)=x+1$,
$\sigma_x(y)=y$, $\sigma_y(x)=x$, $\sigma_y(y)=y+1$, and
$\delta_x=\delta_y=0$.

Consider the classic example of a function that is hypergeometric but not
proper: $f=1/\big(x^2+y^2\big)$; its annihilating ideal~$\mathfrak a$ is
generated by $\big((x+1)^2+y^2\big)\partial_x-x^2-y^2$ and
$\big(x^2+(y+1)^2\big)\partial_y-x^2-y^2$, thus $n=1$.  Choosing $1\in\set
A/\mathfrak a$ as the single basis element~$b_1$, one gets
$M=\big(x^2+y^2\big)/\big((x+1)^2+y^2\big)$; its denominator is clearly not
$y$-proper w.r.t. $\sigma_x$ and $\sigma_y$. Performing the basis change
$H=x^2+y^2$ yields $\tilde{M}=1$ whose denominator is $y$-proper. But still,
the new basis~$\tilde{B}$ does not certify that $f$ is proper
$\partial$-finite since in~$\tilde{B}$ the element $1\in\set A/\mathfrak a$
is represented by $e=H^{-1}$, and therefore $\tilde{B}$ is not admissible.

Next consider the function 
$f=1/(x+y)!+1/(x-y)!$ with the standard monomial
basis $B=\{1,\partial_y\}$, i.e., the basis elements $b_1$ and $b_2$ correspond
to $f(x,y)$ and $f(x,y+1)$. With respect to this basis, the
matrix~$M$ is
\[
  M = \frac1p
  \begin{pmatrix}
   \frac{x^2-2xy+y^2+x-y-1}{y-x+1} & \frac{2y}{y-x-1} \\[1ex]
   \frac{2(y+1)}{x+y+2} & -\frac{x^2+2xy+y^2+3x+3y+1}{x+y+2} \\
  \end{pmatrix}
\]
where $p=y^2-x^2+y-x+1$ is an irreducible quadratic factor.
Again, the basis~$B$ does not certify properness (see also Proposition~\ref{prop:shift}),
but this time we succeed in finding an admissible basis~$\tilde{B}$ which does.
With
\[
  H = \frac1p \begin{pmatrix} (y-x) (x+y+1) & x+y+1 \\ 1 & -(x+y+1) \end{pmatrix}
\]
the multiplication matrices $\tilde{M}=\sigma_x(H)MH^{-1}$ and $\tilde{N}$ are
\[
  \tilde{M} = \begin{pmatrix} \frac{1}{x+y+1} & 0 \\ 0 & \frac{1}{x-y+1} \end{pmatrix}=\frac1uU, \quad
  \tilde{N} = \begin{pmatrix} \frac{1}{x+y+1} & 0 \\ 0 & x-y \end{pmatrix}.
\]
(Note that the diagonal structure of these matrices reveals that the basis elements
$\tilde{b}_1$ and $\tilde{b}_2$ correspond to $1/(x+y)!$ and $1/(x-y)!$,
respectively.) We have now
\[
  \fac urx = \prod_{i=1}^r \big((x+y+i)(x-y+i)\big) = \fac{(x+y+1)(x-y+r)}ry
\]
and therefore $\leftpart{\fac urx}y=(x+y+1)(x-y+r)$ for all $r\geq1$.
\end{example}

\subsection{Main Result}

We are now ready to show the existence of telescopers for proper $\partial$-finite ideals,
and to give an explicit bound on their order.

\begin{theorem}\label{thm:bound}
  Assume that $\mathfrak{a}\subseteq\set A=K(x,y)[\partial_x,\partial_y]$ is proper $\partial$-finite
  with respect to~$y$. Let $\varrho$ be the height of~$\mathfrak{a}$, let
  $n=\dim_{K(x,y)}\set A/\mathfrak{a}$, and let
  $\phi=\dim_{K(x)}\bigl\{W\in\set A/\mathfrak{a} \mid \partial_yW=0\bigr\}$.
  Then there exist $T\in K(x)[\partial_x]\setminus\{0\}$ and $C\in\set A$
  such that $T-\partial_y C\in\mathfrak{a}$ and $\ord(T)\leq n\varrho+\phi$.
\end{theorem}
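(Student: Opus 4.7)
The plan is to adapt the Apagodu--Zeilberger template, combining Lemma~\ref{lemma:T} and Lemma~\ref{lemma:C} with the properness hypothesis. I fix an admissible basis $B$ of $\set A/\mathfrak a$ that attains the minimum height~$\varrho$, let $e\in K(x)[y]^n$ represent $1\in\set A/\mathfrak a$, and write the $\partial_x$- and $\partial_y$-multiplication matrices as $\frac1uU$ and $\frac1vV$. Setting $r:=n\varrho+\phi$, I make the ansatz $T=\sum_{i=0}^r t_i\partial_x^i$ with undetermined $t_i\in K(x)$; by Lemma~\ref{lemma:T} this gives $Te=\frac1{\fac urx}w_T$ for some $w_T\in K(x)[y]^n$ whose coefficients are $K(x)$-linear forms in the~$t_i$, of $y$-degree at most $\deg_y(e)+r\max\{\deg_y(u),\deg_y(U)\}$.

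Next, I exploit properness to align denominators. Definition~\ref{def:proper} supplies a decomposition $\fac urx = g\prod_{i=1}^m\fac{f_i}{p_i}y$ and $v=\sigma_y(h)\prod_{i=1}^m\fac{f_i}{q_i}y$ with $\deg_y(\leftpart gy)\leq\eta$. Applying Lemma~\ref{lemma:C} with target denominator $d=\fac urx$ produces, from any $w\in K(x)[y]^n$ of degree at most~$s$, a certificate $C$ with $Ce=(h/z)w$ such that $\partial_y(Ce)=\frac1d\tilde w$ and $\deg_y(\tilde w)\leq s+\varrho$. Choosing $s$ so that both sides share the same degree bound, the identity $T-\partial_yC\in\mathfrak a$ becomes $w_T=\tilde w$ in $K(x)[y]^n$, which upon matching coefficients of powers of~$y$ in each component yields a homogeneous $K(x)$-linear system. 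A direct tally gives
\[
  (\text{unknowns})-(\text{equations})=(r{+}1)+n(s{+}1)-n(s{+}\varrho{+}1)=r+1-n\varrho=\phi+1,
\]
independently of the precise choice of~$s$, so the solution space~$S$ satisfies $\dim_{K(x)}S\geq\phi+1$.

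The last step is to exclude the trivial solutions with $t_0=\cdots=t_r=0$, and this is where~$\phi$ enters. Any such solution forces $\partial_y(Ce)=0$ in $\set A/\mathfrak a$, i.e.\ $Ce\in K_0:=\{W\in\set A/\mathfrak a:\partial_yW=0\}$. Since $h/z\neq 0$ and $\{b_i\}$ is a basis, the $K(x)$-linear map $w\mapsto\sum_i(h/z)w_ib_i$ is injective, so the ``bad'' subspace $S_{\mathrm{bad}}$ embeds $K(x)$-linearly into $K_0$ and therefore $\dim_{K(x)}S_{\mathrm{bad}}\leq\phi$. Consequently $S\setminus S_{\mathrm{bad}}\neq\emptyset$, yielding a telescoper $T\neq 0$ of order at most $r=n\varrho+\phi$, as claimed. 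The main obstacle I expect is precisely the denominator-matching step: only the properness hypothesis guarantees that $\fac urx$ admits a decomposition compatible with the $\sigma_y$-structure required by Lemma~\ref{lemma:C}, and it is the definition of the height that absorbs the resulting ``overshoot'' $\eta$ into a uniform $y$-degree bound and makes the equation count come out to exactly $\phi+1$.
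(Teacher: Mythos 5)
Your proposal is correct and follows essentially the same route as the paper's proof: the same ansatz with $r=n\varrho+\phi$, the same use of Lemmas~\ref{lemma:T} and~\ref{lemma:C} with $d=\fac urx$ and $s$ chosen so the degree bounds coincide, and the same dimension count yielding $\phi+1$. Your explicit final step --- bounding the subspace of solutions with $T=0$ by embedding it $K(x)$-linearly into $\{W\mid\partial_yW=0\}$ via the injective map $w\mapsto\frac hz wb$ --- is exactly the argument the paper leaves implicit in its closing sentence, so this is a welcome clarification rather than a deviation.
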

\begin{proof}
  Let $r=n\varrho+\phi$ and make an ansatz $T=t_0+t_1\partial_x+\cdots+t_r\partial_x^r$
  with undetermined $t_i\in K(x)$ for a telescoper.
  Let $B,\frac1uU,\frac1vV,f_1,\dots,f_m,g,h,q_1,\dots,q_m,p_1,\dots,p_m$ be as in\linebreak
  Definition~\ref{def:proper}. 
  Let $e\in K(x)[y]^n$ be the vector representing $1\in \set A/\mathfrak{a}$ with respect to~$B$,
  let $\gamma=\max\{\deg_y(u),\deg_y(U)\}$, and let $s=\deg_y(e)+r\gamma-\varrho$.
  For the certificate part make an ansatz
  $c=\frac hz\big(\sum_{j=0}^{s}c_{1,j}y^j,\dots,\sum_{j=0}^sc_{n,j}y^j\big)$ with $z$ as in
  Lemma~\ref{lemma:C} and undetermined coefficients $c_{i,j}\in K(x)$.
  A corresponding operator $C\in\set A$ with $Ce=c$ is obtained by $C=c\cdot(B_1,\dots,B_n)^T$
  where $B_i$ is an operator in $\set A$ such that $B_i\cdot1_{\set A/\mathfrak a}=b_i$.

  According to Lemma~\ref{lemma:T}, $Te=\frac1{\fac urx}w_T$ for some vector
  $w_T$ whose entries are linear combinations of the undetermined $t_i$ with
  coefficients in $K(x)[y]$ of degree at most~$\deg_y(e)+r\gamma$.

  According to Lemma~\ref{lemma:C}, $\partial_yc=\frac1{\fac urx}w_C$
  for some vector $w_C$ whose entries are linear combinations of the
  undetermined $c_{i,j}$ with coefficients in $K(x)[y]$ of degree at most
  $s+\varrho=\deg_y(e)+r\gamma$.

  Comparing coefficients with respect to $y$ in all the $n$ coordinates of $w_T$ and $w_C$ gives a linear
  system over $K(x)$ with $n(\deg_y(e)+r\gamma+1)$ equations in $(r+1)+n(s+1)$ unknowns.
  This system has a solution space of dimension at least
  \begin{alignat*}1
    &(r+1)+n(s+1) - n(\deg_y(e)+r\gamma+1)\\
    &=(r+1)+n(\deg_y(e)+r\gamma-\varrho+1) - n(\deg_y(e)+r\gamma+1)\\
    &=r+1-n\varrho=\phi+1.
  \end{alignat*}
  As this is greater than~$\phi$, the solution space must contain at least one vector which corresponds
  to a nonzero operator~$T$.
\end{proof}

Note that the number $\phi$ in Theorem~\ref{thm:bound} is bounded by~$n$.  To
see this, write $W=\sum_{i=1}^n w_i b_i=wb$ for some undetermined $w_i\in K(x,y)$.
Then the requirement $\partial_yW=0$ translates into a first-order linear system of
functional equations $\sigma_y(w)\frac1vV+\delta_y(w)=0$.
It is well known that such a system
can have at most $n$ solution vectors that are linearly independent over the
field of $\sigma_y$-constants, which is $K(x)$ in our case. For a hypergeometric
term $h(x,y)$ we have that $\phi=1$ if and only if $h\in K(x,y)$; this explains
why Theorem~\ref{thm:bound} doesn't exclude such special cases, as opposed to
Apagodu and Zeilberger's theorem, see Section~\ref{sec:az}.

Theorem~\ref{thm:bound} also contains an algorithm for creative telescoping, at
least when $\varrho$ (and the corresponding basis~$B$) and $\phi$ are known or can be computed. In this case, it
suffices to make an ansatz for telescoper and certificate as in the proof,
compare coefficients, and then solve the resulting linear system.

\section{Important Special Cases}

Most important in applications are the differential case (integration) and the
shift case (summation). We will discuss the implications of
Definition~\ref{def:proper} and Theorem~\ref{thm:bound} for these two cases.
Whether a $\partial$-finite ideal is proper or not depends mostly on the
denominators $u$ and $v$ of the multiplication matrices, and not so much on the
numerators $U$ and~$V$. Since $u$ and $v$ are not matrices but only scalar
polynomials, the following discussion is not much different from the
hyperexponential or hypergeometric case.

\subsection{Differential Case}

We consider the case where we act on both variables $x$ and~$y$ with the 
partial derivation, i.e., we have $\sigma_x=\sigma_y=\mathrm{id}$,
$\delta_x=\frac{\d}{\d x}$, and $\delta_y=\frac{\d}{\d y}$.
We have already mentioned that in the differential case every $\partial$-finite
ideal is proper $\partial$-finite. We now show that in this case $u$ and $v$
must be essentially equal. This generalizes Lemma~8 of~\cite{GeddesLeLi2004}. A consequence
is that in part~\ref{def:proper:3} of Definition~\ref{def:proper} we can always take $\eta=0$.

\begin{prop}\label{prop:diff}
  If $\mathfrak{a}\subseteq\set A$ is $\partial$-finite, $B$~is a basis of $\set A/\mathfrak{a}$
  and the multiplication matrices are $\frac1uU,\frac1vV$, then the squarefree part of $u$ in $K(x)[y]$
  divides the squarefree part of $v$ in $K(x)[y]$.
\end{prop}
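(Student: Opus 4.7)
\emph{Plan.} I would extract a matrix integrability identity from the commutativity of $\partial_x$ and $\partial_y$ in $\set A/\mathfrak a$, and then compare $p$-adic valuations at each irreducible factor $p\in K(x)[y]$ of~$u$.

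Applying $\partial_x\partial_y - \partial_y\partial_x = 0$ to an arbitrary vector $w\in K(x,y)^n$ and using $\sigma_x = \sigma_y = \mathrm{id}$ together with the formulas $\partial_xw = wM + \delta_x(w)$ and $\partial_yw = wN + \delta_y(w)$, the mixed terms cancel term by term (using $\delta_x\delta_y = \delta_y\delta_x$ on $K(x,y)$), leaving the identity
\[
\delta_y(M) - \delta_x(N) = NM - MN
\]
in $K(x,y)^{n\times n}$, where $\delta_x,\delta_y$ act entrywise. Without loss of generality I may assume $M = u^{-1}U$ is in reduced form, meaning no irreducible $p\in K(x)[y]$ divides both $u$ and every entry of~$U$; cancelling common factors does not change $M$ or the squarefree part of~$u$.

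Now let $p$ be an irreducible factor of $u$ with $\alpha = v_p(u)\ge 1$, and choose an entry $U_{ij}$ with $p\nmid U_{ij}$, so that $v_p(M_{ij}) = -\alpha$. Suppose for contradiction that $p\nmid v$. Then $N$ and $\delta_x(N)$ are both regular at~$p$, since $d/dx$ cannot introduce a factor of $p \in K(x)[y]$ into a denominator; together with $v_p(M)\ge -\alpha$, this forces every entry of the right-hand side $NM-MN+\delta_x(N)$ to satisfy $v_p\ge -\alpha$. On the other hand, writing $u = p^\alpha q$ with $p\nmid q$, the quotient rule gives
\[
\delta_y(M_{ij}) = -\alpha\, p^{-\alpha-1}(\delta_y p)\frac{U_{ij}}{q} + p^{-\alpha}\,\delta_y\!\left(\frac{U_{ij}}{q}\right),
\]
whose leading term has $v_p = -\alpha-1$: indeed $\alpha\ne 0$ in characteristic zero, the irreducible $p$ has positive $y$-degree (being a non-unit in $K(x)[y]$) and is therefore separable in~$y$, so $p\nmid\delta_y p$, and $p\nmid q U_{ij}$. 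This yields $v_p(\delta_y(M_{ij})) = -\alpha-1 < -\alpha$, contradicting the bound from the right-hand side.

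Thus $p\mid v$; letting $p$ range over all irreducible factors of~$u$ yields the claim. The main subtlety I anticipate is the harmless reduction to reduced form and verifying that $\delta_y p\not\equiv 0\pmod p$, which relies on characteristic zero together with separability; the rest is a routine $p$-adic valuation count.
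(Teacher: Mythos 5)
Your proposal is correct and follows essentially the same route as the paper: both derive the compatibility identity $NM+\delta_x(N)=MN+\delta_y(M)$ from $\partial_x\partial_y=\partial_y\partial_x$ and then compare the multiplicity (valuation) of each irreducible factor $p$ of $u$ with $\deg_y(p)>0$ on the two sides, obtaining $-\alpha-1$ versus at least $-\alpha$ if $p\nmid v$. Your explicit justifications of the reduced-form assumption and of $p\nmid\delta_y(p)$ are details the paper leaves implicit, but the argument is the same.
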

\begin{proof}
Let~$M := \frac1uU$ and~$N := \frac1vV$. By definition, 
\begin{alignat*}{1}
  \partial_x\partial_y w & = \partial_x(wN + \delta_y(w))  \\
   & = wNM + \delta_x(wN) + \delta_y(w)M + \delta_x\delta_y(w)\\
   & = wNM + w\delta_x(N) + \delta_x(w)N + \delta_y(w)M + \delta_x\delta_y(w).\\
  \partial_y\partial_x w & = \partial_y(wM + \delta_x(w))  \\
   & = wMN + \delta_y(wM) + \delta_x(w)N + \delta_y\delta_x(w)\\
   & = wMN + w\delta_y(M) + \delta_y(w)M + \delta_x(w)N + \delta_y\delta_x(w).
\end{alignat*}
Because of $\partial_x\partial_y=\partial_y\partial_x$, we have the compatibility condition
  \[NM + \delta_x(N) = MN + \delta_y(M).\]
Let $p\in K[x,y]$ with $\deg_y(p)>0$ be an irreducible factor of~$u$, let
  $(i,j)$ be such that $p\nmid U_{i,j}$, and let $m$ be the multiplicity of $p$
  in~$u$. Then the multiplicity of $p$ in the denominator of
  $\delta_y(\frac1uU)_{i,j}$ is $m+1$.
  If $p$ were not also a factor of~$v$, then the multiplicity of $p$ in the
  denominator of $\frac1vV\frac1uU+\delta_x(\frac1vV)-\frac1uU\frac1vV$ could
  be at most~$m$.
\end{proof}

\begin{example}
  Let
  \[
   p=(x^2+x+1)+(2x^2-x+1)y+(x^2-2x+3)y^2
  \]
  and let $\mathfrak{a}\subseteq\set Q(x,y)[\partial_x,\partial_y]$ be the annihilator
  of $f=p^{-1/3}+p^{-1/5}$. Then $n=\dim_{\set Q(x,y)}\set A/\mathfrak{a}=2$
  (since the two summands of~$f$ are hyperexponential but $\set Q(x,y)$-linearly independent)
  and $\phi=\dim_{\set Q(x)}\{W\in\set A/\mathfrak{a}\mid\partial_yW=0\}=0$.  The
  algebra $\set A/\mathfrak{a}$ is isomorphic as $K(x,y)$-vector space to
  $K(x,y)p^{-1/3} + K(x,y)p^{-1/5}$. With respect to the basis
  $B=\{p^{-1/3},p^{-1/5}\}$, the element $1\in\set A/\mathfrak{a}$ is represented
  by the vector $(1,1)\in\set K(x)[y]^2$ and the multiplication matrices are
  \[
    M=\frac{D_x(p)}p\begin{pmatrix} -1/3 &\kern-.5em 0 \\ 0 &\kern-.5em -1/5 \end{pmatrix},\kern.5em
    N=\frac{D_y(p)}p\begin{pmatrix} -1/3 &\kern-.5em 0 \\ 0 &\kern-.5em -1/5 \end{pmatrix}.
  \]
  We can therefore take $u=v=p$ and have
  \[\max\{\deg_y(v)-1,\deg_y(V)\}=1.\]
  Theorem~\ref{thm:bound} predicts a telescoper of order $1\cdot 2+0=2$, and
  it can be confirmed for instance using Chyzak's algorithm that this is in
  fact the minimal order operator.

  Repeating a similar calculation with random polynomials $p$ of $y$-degree $d$
  ($d=2,\dots,5$) and linear combinations $f=p^{e_1}+\dots+p^{e_n}$ with $n$
  rational exponents with pairwise coprime denominators ($n=1,\dots,4$), we found
  the minimal telescopers to be of order $n(d-1)$, in accordance with the bound
  given in Theorem~\ref{thm:bound}.
\end{example}

In the hyperexponential case, Theorem~\ref{thm:bound} reduces to the known bound
quoted at the end of Section~\ref{sec:az}.

\subsection{Shift Case}

In this section, let $\sigma_x$ and $\sigma_y$ denote the standard shifts with
respect to $x$ and~$y$, respectively, i.e., $\sigma_x(x)=x+1$, $\sigma_x(y)=y$,
$\sigma_y(x)=x$, $\sigma_y(y)=y+1$. Let $\delta_y$ be the forward difference
with respect to~$y$ and $\delta_x$ either identically zero or the forward
difference with respect to~$x$.

For a polynomial $p\in K[x,y]$ and $n\in\set N$,
we write $p^{\overline{n}}:=p(p+1)\cdots(p+n-1)$ and
$p^{\underline{n}}:=p(p-1)\cdots(p-n+1)$. Note that these quantities are in
general different from $\fac pnx$ and $\fac pny$.

\begin{prop}\label{prop:shift}
  A $\partial$-finite ideal $\mathfrak{a}$ is proper if and only if there
  exists an admissible basis~$B$ of $\set A/\mathfrak{a}$ for which
  the multiplication matrices
  $\frac1uU$, $\frac1vV$ are such that $u$ is a product of
  integer-linear polynomials.

  More specifically, suppose we can write
  \begin{alignat*}1
    u &= \prod_{i=1}^{k+\ell}(a_ix+b_iy+c_i)^{\overline{a_i}}(a_i'x-b_i'y+c_i')^{\underline{a_i'}} \\
    v &= \sigma_y(h)\prod_{i=1}^{k}(a_ix+b_iy+c_i)^{\overline{b_i}}(a_i'x-b_i'y+c_i')^{\underline{b_i'}}
  \end{alignat*}
  for certain $a_i,b_i,a_i',b_i'\in\set N$ and $c_i,c_i'\in K$, and $h\in K[x,y]$.
  If $\eta$ is as in part~\ref{def:proper:3} of Definition~\ref{def:proper},
  then $\eta\leq\sum_{i=k+1}^{k+\ell}(b_i+b_i')$.
\end{prop}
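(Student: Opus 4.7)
My plan is to prove the equivalence and the bound on~$\eta$ in parallel: the \emph{if} direction and the $\eta$-bound by an explicit decomposition of $\fac urx$, and the \emph{only if} direction by a translation-symmetry argument on the irreducible factors of~$u$. Since both sides of the equivalence are existential statements about the basis, it suffices to establish, for a fixed basis, that $u$ is $y$-proper if and only if every irreducible factor of $u$ in $K[x,y]$ of positive $y$-degree has the form $ax+by+c$ with $a,b\in\set Z$ and $c\in K$.

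For the \emph{if} direction and the bound on~$\eta$, I would use $\sigma_x(a_ix+b_iy+c_i)=a_ix+b_iy+c_i+a_i$ to telescope $\fac{((a_ix+b_iy+c_i)^{\overline{a_i}})}{r}{x}=(a_ix+b_iy+c_i)^{\overline{ra_i}}$, and analogously for the falling factorials. A big rising factorial $(a_ix+b_iy+c_i)^{\overline{N}}$ then splits, by grouping its $N$ factors according to the residue of their constant term modulo~$b_i$, into $b_i$ shorter $\sigma_y$-factorials of the form $\fac{(y+\alpha_{i,m})}{\lceil N/b_i\rceil}{y}$ or $\fac{(y+\alpha_{i,m})}{\lfloor N/b_i\rfloor}{y}$; the falling case is dual. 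For indices $i\le k$ the factor shared between $u$ and~$v$ can be absorbed into an $\fac{f_i}{p_i}{y}$ matching the $\fac{f_i}{q_i}{y}$ occurring in~$v$, with $p_i\ge q_i\ge 1$ since $ra_i$ dominates $b_i$ for every $r\ge1$. For indices $i>k$ the factor is absent from~$v$, so its whole decomposition must live inside~$g$, contributing at most $b_i+b_i'$ to $\deg_y\leftpart gy$; summing yields the bound $\eta\le\sum_{i=k+1}^{k+\ell}(b_i+b_i')$ and, in particular, shows that $u$ is $y$-proper.

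For the \emph{only if} direction, suppose $u$ is $y$-proper and let $p\in K[x,y]$ be an irreducible factor of $u$ with $\deg_yp>0$; I may assume $p$ depends on~$x$, since otherwise irreducibility and $\deg_yp>0$ force $p$ to be linear in~$y$ alone, which is already integer-linear. The shifts $p(x+j,y)$ for $0\le j<r$ are then pairwise non-associate irreducibles in $K(x)[y]$. If $p$ admits no nonzero pair $(\alpha,\beta)\in\set Z^2$ with $p(x+\alpha,y)=p(x,y+\beta)$, these shifts lie in pairwise distinct $\sigma_y$-orbits, so the greatest factorial factorization of $\fac prx$ collapses to $n=1$ and $\leftpart{\fac prx}y$ has $y$-degree $r\deg_yp\to\infty$, contradicting $y$-properness of~$u$. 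Hence such a pair $(\alpha,\beta)$ must exist; the resulting translation-invariance forces $p$ to be a polynomial in the single linear form $\beta x-\alpha y$, and irreducibility then forces $p$ to be linear in this form, i.e., integer-linear in $x$ and~$y$.

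The main obstacle will be making the \emph{only if} direction watertight when $u$ has several distinct irreducible factors: two non-integer-linear factors $p,q$ of~$u$ could \emph{a priori} have their $\sigma_x$- and $\sigma_y$-shifts interlock (via an identity $p(x+j,y+k)=q(x+j',y)$) and so produce long $\sigma_y$-chains spanning different factors. I plan to rule this out by noting that any such cross relation already makes $q$ an integer translate of~$p$, so iterating the relation inside $\fac urx$ eventually yields a self-translation of~$p$ and reduces the analysis back to the single-factor case handled above.
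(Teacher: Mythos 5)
Your ``if'' direction and the bound on~$\eta$ coincide with the paper's argument: there too one telescopes $\fac{(a_ix+b_iy+c_i)^{\overline{a_i}}}rx=p_i^{\overline{a_ir}}$, regroups the $a_ir$ consecutive factors by residue modulo $b_i$ into chains $\fac{p_i+j}{s+1}y$ resp.\ $\fac{p_i+j}{s}y$ (with $a_ir=sb_i+t$), feeds the chains with $i\le k$ into the $f_i$'s, and collects $g=\prod_{i>k}p_i^{\overline{a_ir}}$ with $\leftpart gy=\prod_{i>k}p_i^{\overline{b_i}}$. Where you genuinely diverge is the ``only if'' direction: the paper, like you, deduces from the boundedness of the left borders that each irreducible factor $p$ of $u$ with $\deg_xp,\deg_yp>0$ satisfies $\sigma_x^{s}(p)=\sigma_y^{t}(p)$ for some $s>0$, but it then simply cites Corollary~1 of Abramov--Petkov\v{s}ek to conclude integer-linearity, whereas you prove that implication from scratch via translation-invariance (invariance under $(x,y)\mapsto(x-\alpha,y+\beta)$ makes $p$ periodic, hence constant, along lines in that direction, so $p=P(\beta x+\alpha y)$ for a univariate~$P$ --- note the sign: it is $\beta x+\alpha y$, not $\beta x-\alpha y$, that is annihilated by the translation vector). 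This buys self-containedness, at the price of two wrinkles you should iron out. First, ``irreducibility forces $P$ to be linear'' fails over a non-algebraically-closed $K$: over $\set Q$ the polynomial $(x+y)^2+1$ is irreducible, satisfies $\sigma_x(p)=\sigma_y(p)$, and is $y$-proper; the correct (and sufficient) conclusion is $p=P(ax+by)$, which is also what the cited corollary delivers, and which the ``more specifically'' part then consumes after splitting $P$; for the same reason your parenthetical claim that an irreducible $p$ with $\deg_xp=0$ must be ``linear in $y$ alone'' is false, though harmless, since such a factor piles up as an $r$-th power in $\fac urx$ and already destroys properness. Second, your proposed repair of the interlocking problem is not yet a proof: iterating ``$q$ is an integer translate of $p$'' does not by itself yield a self-translation of~$p$. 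The cleaner argument is that since $u$ contains only finitely many translates of $p$, every $\sigma_y$-chain in the greatest factorial factorization of $\fac urx$ involving translates of $p$ has length bounded independently of $r$ unless a self-relation $\sigma_x^\alpha(p)\sim\sigma_y^\beta(p)$ with $\alpha\neq0$ exists, and bounded chain length already forces $\deg_y\bigl(\leftpart{\fac urx}y\bigr)$ to grow linearly in~$r$; with that substitution your argument closes. (The paper glosses over this multi-factor point entirely, so on this step you are, if anything, more careful than the original.)
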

\begin{proof}
  Suppose that $\mathfrak{a}$ is proper $\partial$-finite.
  Let $p$ be an irreducible factor of~$u$
  such that both~$\deg_x(p)$ and~$\deg_y(p)$ are nonzero.
  We have $\fac urx \mid \fac u{r+1}x$ for all $r\geq0$.
  By the condition in part~\ref{def:proper:1} of Def.~\ref{def:proper},
  the set of irreducible factors of the left borders of $\fac urx$
  for all $r\geq0$ is finite.
  Therefore, there is at least one positive integer~$s$
  such that $\sigma_x^{s}(p) = \sigma_y^{t}(p)$ for some $t\in \set Z$.
  By Corollary~1 in~\cite[page 400]{AbramovPetkovsek2002a},
  $p$ is integer-linear.

  Conversely, if $u$ is a product of integer-linear polynomials $a_ix+b_iy+c_i$,
  then it is sufficient to prove the more specific claim, because if $u$ and $v$
  are not given in this form, we can multiply both $u$ and $U$ with the missing
  factors such as to complete the rising and falling factorials.

  In order to keep the notation simple, let us only discuss the factors
  $(a_ix+b_iy+c_i)$. An analogous argument applies to the other factors
  $(a_i'x-b_i'y+c_i')$.

  Let $r\geq1$. For fixed $i$, write $p_i=a_ix+b_iy+c_i$ and let $s,t\in\set N$ be such
  that $a_ir=sb_i+t$. Then
  \begin{alignat*}1
    &\fac{(a_ix+b_iy+c_i)^{\overline{a_i}}}rx
    =p_i^{\overline{a_ir}}\\
    ={}&(p_i)(p_i+1)\cdots\cdots(p_i+b_i-1)\\
    &\times(p_i+b_i)(p_i+b_i+1)\cdots(p_i+2b_i-1)\\
    &\vdots\\
    &\times(p_i+(s-1)b_i)(p_i+(s-1)b_i+1)\cdots(p_i+sb_i-1)\\
    &\times(p_i+sb_i)\cdots(p_i+sb_i+t)\\
    ={}&\prod_{j=0}^t \fac{p_i+j}{s+1}y
      \prod_{j=t+1}^{b_i-1} \fac{p_i+j}sy.
  \end{alignat*}
  Therefore, if we choose $f_1,\dots,f_m$ to be all the linear factors $p_i+j$ ($i=1,\dots,k$;
   $j=0,\dots,b_i-1$) and set
  \[
    g=\prod_{i=k+1}^{k+\ell} p_i^{\overline{a_i r}},
  \]
  then we will have $u=g\prod_{i=1}^m\fac{f_i}{s_i}y$ for certain $s_i\in\set N$
  with $s_i\geq\lfloor a_ir/b_i\rfloor$
  and
  \[
    \leftpart gy = \prod_{i=k+1}^{k+\ell} p_i^{\overline{b_i}},
  \]
  the $y$-degree of which is $\sum_{i=k+1}^{k+\ell}b_i$, as claimed.
\end{proof}

\begin{example}
  For fixed $n\geq0$ and~$\varrho$, the annihilator $\mathfrak{a}$ of the function
  \[
    f(x,y)=\frac{1+2^y+3^y+\cdots+n^y}{\Gamma(x+\varrho y)}
  \]
  is proper $\partial$-finite with $\eta=0$, dimension~$n$, and height~$\varrho$.
  As the exponential terms $k^y$ ($k=1,\dots,n$) are algebraically independent over $K(x,y)$,
  there is no nontrivial $W\in\set A/\mathfrak{a}$ for which $\partial_yW=0$.
  Therefore $\phi=0$. The minimal telescoper for $f(x,y)$ is
  \[
    T = (\partial_x^\varrho-1)(\partial_x^\varrho-2)\cdots(\partial_x^\varrho-n)
  \]
  and its order $n\varrho=n\varrho+\phi$ matches the bound of Theorem~\ref{thm:bound}.
  The corresponding certificate~$C$ cannot be written in such a nice form and
  is therefore not displayed here.
\end{example}

For the hypergeometric case, our bound does not exactly reduce to the known
bounds stated in Section~\ref{sec:az} for this case. Our bound is at the same
time better and worse than the old bound. It is worse because for the
hypergeometric case it turns out that because of an additional cancellation the
term $\eta=\deg_y(\leftpart gy)$ does not contribute to the order. It is
slightly better because we work in the Ore algebra where $\partial_y$ represents
the forward difference rather than the shift operator, and for certain
hypergeometric terms, it turns out that this improves the bound by~$1$. For
example, for the hypergeometric term $(x+3y+1)!/(x+3y+\sqrt2)!$ our bound
evaluates to~$2$, which is indeed the order of the minimal telescoper, while the
bound of Section~\ref{sec:az} only predicts a telescoper of order~$3$.

\subsection{Mixed and Other Cases}

Thanks to the generality in which we stated our results in
Section~\ref{sec:general} we can not only deal with the pure differential or
pure shift cases discussed above, but also with mixed cases where the two
indeterminates $x$ and $y$ are different in nature (discrete versus
continuous). In these cases, a necessary condition for an ideal to be proper
$\partial$-finite is that the polynomial $u$ is split, i.e., that it can be
written as $u(x,y)=u_1(x)u_2(y)$. A polynomial that violates this condition
can never be $y$-proper. We now give an example where $x$ is a continuous
variable and $y$ is discrete, corresponding to a definite sum over~$y$ for
which a differential equation in~$x$ is sought.

\begin{example}
Let $\set A=K(x,y)[\partial_x,\partial_y]$ be the Ore algebra given by
$\sigma_x=\mathrm{id}$, $\delta_x=\frac{\d}{\d x}$, $\sigma_y(y)=y+1$,
and $\delta_y=\sigma_y-\mathrm{id}$. With respect to this algebra
each member of the family
\[
  f_k(x,y) = (y+1)^{-k} J_y(x),\quad k\in\set N,
\]
involving the Bessel function of the first kind, is $\partial$-finite.
For any fixed~$k$, the annihilator~$\mathfrak{a}$ of $f_k(x,y)$ is generated
by two operators, one of which corresponds to the famous Bessel differential
equation $x^2\partial_x^2+x\partial_x+x^2-y^2$, and we have
$n=\dim_{K(x,y)}(\set A/\mathfrak{a})=2$.
As a basis for $\set A/\mathfrak{a}$ we choose the two monomials $1$ and $\partial_x$
so that the multiplication matrices are
\begin{alignat*}1
  U &= \begin{pmatrix} 0 & x^2 \\ y^2-x^2 & -x \end{pmatrix}\\
  V &= \begin{pmatrix} xy(y+1)^k - x^2(y+2)^k & -x^2(y+1)^k \\ (y+1)^k (x^2-y^2-y) & x(y+1)^{k+1} - x^2(y+2)^k \end{pmatrix}
\end{alignat*}
with denominators $u=x^2$ and $v=x^2 (y+2)^k$. Obviously $u$ is $y$-proper
and therefore the height of $\mathfrak{a}$ is (at most) $\max\{\deg_y(v)-1,\deg_y(V)\}=k+2$.
Taking $\phi=0$ into account, Theorem~\ref{thm:bound} produces the bound $2(k+2)$
for the order of the telescoper. In contrast, the minimal telescoper conjecturally is of order
$2k+1$ (we verified this for $0\leq k\leq 20$), so our bound overshoots by~$3$.
\end{example}

Last but not least let us emphasize that all our results also apply to the $q$-case,
where $\sigma_y(y)=qy$ and $\delta_y=\sigma_y-\mathrm{id}$; it is very much
analogous to the shift case.

\section{Conclusion and Open Questions}

We have shown that the reasoning of Apagodu and Zeilberger applies in the
general setting of $\partial$-finite ideals in Ore algebras.

As a sufficient condition for guaranteeing the existence of a telescoper, we
have introduced the notion of ``proper'' $\partial$-finite ideals, in analogy
with the notion of proper hypergeometric terms in classical summation
theory. For hypergeometric terms, Wilf and Zeilberger conjectured in 1992 that
they are proper if and only if they are holonomic. Slightly modified versions
of this conjecture were proved
independently~\cite{Payne97,AbramovPetkovsek2002a} for the shift case, and
recently~\cite{CKP2013} for general hypergeometric terms.  It is now tempting
to conjecture that, more generally, a $\partial$-finite ideal is proper if and
only if it is holonomic.

For the hypergeometric case, Abramov~\cite{Abramov2003} pointed out that
proper is only a sufficient condition, but it is not necessary for the
existence of a telescoper, and he formulates a finer condition which is
necessary and sufficient. Abramov's existence criterion has been extended to
the $q$-shift case and mixed cases~\cite{Chen2005, ChenThesis}. It would be
interesting to have an analogous result for the $\partial$-finite case.

\subsection*{Acknowledgments}
We would like to thank the anonymous referees for their diligent work
and valuable comments.

\bibliographystyle{plain}

\bibliography{GAZ}
\end{document}